\documentclass[12pt]{amsart}
\usepackage{graphicx,amsmath, amsfonts, amsthm} 
\usepackage{hyperref}

\newtheorem{thm}{Theorem}[section]
\newtheorem{lm}[thm]{Lemma}

\title[The angular momentum distribution in large atoms]{The angular momentum distribution of electrons in large atoms and its deviation from
Madelung's rule}
\author[Bjerg, Fournais, Hearnshaw and Solovej]{August Bjerg,  S{\o}ren Fournais,  Peter Hearnshaw,\\ and Jan Philip Solovej}
\address{Centre for the Mathematics of Quantum Theory\\ University of Copenhagen\\
	Universitetsparken 5\\ DK-2100 Copenhagen \O\\ Denmark}
    
\email{aubjerg23@yahoo.dk}\email{fournais@math.ku.dk}\email{ph@math.ku.dk}\email{solovej@math.ku.dk}

\newcommand{\Phitf}[1]{\Phi^{\rm TF}_{#1}}
\newcommand{\rhotf}[1]{\rho^{\rm TF}_{#1}}
\newcommand{\real}{\mathbb R}
\newcommand{\complex}{\mathbb C}
\newcommand{\Tr}{\textnormal{Tr}}
\begin{document}

\begin{abstract}	 
We show that the angular momentum distribution of neutral atoms, in the sense of occupation numbers, follows the Thomas-Fermi angular momentum distribution in the limit of large atomic number. In particular, we show that the angular momentum distribution of large atoms deviates from the well-known Madelung rule.
\end{abstract}

\keywords{Many-body quantum mechanics, Schr\"odinger equation, energy asymptotics, semiclassical analysis}
\subjclass{81V45, 81Q20}

\maketitle

\section{Introduction}
It is a well established rule in chemistry that the ordering of subshells in neutral atoms is given by the Madelung rule, also known as the Aufbau principle. This principle is very successful for small $Z$, and gives rise to the structure of the periodic table as we know it. The purpose of the current work is to describe the filling up of angular momentum sectors in the non-relativistic quantum setting for neutral atoms in the large $Z$ limit, and, in particular, we describe its deviations from the Madelung rule. The Madelung rule is fairly accurate for the physical atoms in the periodic table, but as we show it eventually fails as atoms grow larger. This raises the interesting question of how large atoms need to be before the Madelung rule no longer applies. 

Our setting is as follows. For each integer $Z \ge 1$ we consider a normalized ground state $\Psi=\Psi_Z$ of the neutral atomic Hamiltonian
\begin{equation*}
    H_Z=\sum_{i=1}^Z\Big(-\Delta_i-\frac{Z}{|x_i|}\Big)+\sum_{1\leq i< j\leq Z} \frac1{|x_i-x_j|}
\end{equation*}
on the space $\bigwedge_{i=1}^Z L^2({\mathbb R}^3,{\mathbb C}^2)$. We use the units $\hbar=2m=e=1$. It is known that a ground state exists for the neutral atom. It may not be unique, but we simply make a choice of $\Psi_Z$  for each $Z$. We also define the one-particle density matrix as the operator $\gamma_\Psi : L^2(\real^3,\complex^2)\to L^2(\real^3,\complex^2)$ given by the kernel
\begin{align*}
    \gamma_\Psi(x,y) = Z\int \Psi_Z(x, x_2, \dots, x_Z)\overline{\Psi_Z(y, x_2, \dots, x_Z)}\,dx_2\dots dx_Z.
\end{align*}
Furthermore, we will denote by $P_\ell$, $\ell=0,1,\ldots$, the projection in $L^2({\mathbb R}^3,{\mathbb C}^2)$ onto the subspace with angular momentum $\ell$.

For $Z > 0$, let $\rho_Z^{TF}$ be the Thomas-Fermi density. Then the Thomas-Fermi potential is given by
\begin{align}
\label{eq:phi_tf}
    \Phi^{TF}_Z(x) = \frac{Z}{|x|} - \int \frac{\rho^{TF}_Z(y)}{|x-y|}\,dy.
\end{align}
Both the density and potential are spherically symmetric and positive. They obey the relation $(\Phi_Z^{TF})^{3/2} = 3\pi^2\rho_Z^{TF}$. Furthermore, the Thomas-Fermi potential obeys the following scaling property,
$$
\Phitf{Z}(x)=Z^{4/3}\Phitf1(Z^{1/3}x).
$$

An important object in this work is the following function, 
\begin{align}
\label{eq:kappa}
    \kappa(\lambda) = \frac1\pi\int_0^\infty(\Phitf1(r)-\lambda^2r^{-2})^{1/2}_+dr
\end{align}
for $\lambda \ge 0$. We here use the convention $(x)_{+} = \max\{0,x\}$ and $(x)_{-} = \min\{0,x\}$. This is a uniformly bounded continuous function with compact support. In addition, it is readily verified that
\begin{align}
    \int_0^\infty 4\lambda\kappa(\lambda) \,d\lambda =1.\label{eq:kappaint}
\end{align}

The function which we denote $\kappa(\lambda)$ was used by E.~Fermi, \cite{fermi28}, to describe the distribution of angular momentum in his own approximate model of the atom, \cite{fermi27}, now known as the Thomas-Fermi model. In particular, he obtains the following formula for the number of electrons of angular momentum $\ell$,
\begin{align*}
    N_\ell = 4\ell\, Z^{1/3} \kappa\Big(\frac{\ell}{Z^{1/3}}\Big).
\end{align*}
The sum over all $\ell$ should give $Z$, and indeed it does in the limit of large $Z$ by a Riemann sum approximation and application of \eqref{eq:kappaint}. Efforts have been made to compare this continuous distribution with the orbital model of quantum theory where electrons appear with discrete angular momentum (for example s, p, d, etc. orbitals). For example, in \cite{jenlutt52} the mean squared angular momentum of the Thomas-Fermi model is compared to known ground-state configurations. T.~A.~Oliphant in \cite{oli56} performed a similar analysis with the Thomas-Fermi-Dirac model. A difficulty in these works is that the angular momentum $\ell$ is not a natural parameter in the density functional theories that they study. A natural way to investigate angular momentum distributions would be to consider the mean-field Hamiltonian
$
-\Delta-\Phitf{Z}
$
and study the angular momentum distribution for the projection onto the $Z$ lowest eigenstates.  
The eigenvalues of the Thomas-Fermi and Thomas-Fermi-Dirac mean-field models were studied in \cite{Latter1955}.

In the present work we study the full non-relativistic quantum description of atoms. Nevertheless the above Thomas-Fermi mean-field model will turn out to play an important role as a controlled approximation. 

When considering electronic configurations, it's natural to ask what the angular momentum is for the latest electron compared to the neutral atom with one less nuclear charge. Indeed, the periodic table is arranged into groups, each of which has an associated angular momentum which is that of the subshell currently being filled. Strictly speaking, in the full quantum setting of interacting electrons, the concept of orbitals of precise angular momentum is ill-defined. Instead, a natural way to quantify the number of electrons of angular momentum $\ell$ in an atom is by the expression 
$$
\langle \Psi|\sum_{i=1}^ZP_{\ell,i}|\Psi\rangle=\Tr[P_\ell\gamma_\Psi],
$$ which can be understood as the occupation number in the sector $\ell$.

We remark that periodicity in the large $Z$ limit has been observed in the Thomas-Fermi mean field model in work by two of the current authors, \cite{bjergsol24}. An interpretation of this result is that the convergence of Thomas-Fermi mean field operators for a sequence of $Z$'s would correspond to descending a group in the periodic table. Although not the subject of the present work, it is believed that these sequences should correspond to atoms with similar angular momentum distributions.

We are now ready to state our main result. We show how $\Tr[P_\ell\gamma_\Psi]$ must grow along certain sequences of $\ell$ which grow proportional to $Z^{1/3}$, thereby tying the Thomas-Fermi angular momentum distribution, $\kappa(\lambda)$, to the quantum angular momentum distribution at large $Z$.


\begin{thm}\label{thm:main} For almost every (in the Lebesgue measure theoretic sense) $\lambda\geq 0$ we have 
$$
\liminf_{\substack{Z\to \infty \\ \ell=\lceil \lambda Z^{1/3}\rceil}} \frac{\text{\rm Tr}[P_\ell\gamma_\Psi]}{2(2\ell+1)Z^{1/3}}
= \kappa(\lambda).
$$
Furthermore, the inequality ``\,$\ge$" holds for every $\lambda>0$.
\end{thm}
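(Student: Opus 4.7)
The idea is to compare the interacting density matrix $\gamma_\Psi$ with the spectral projection $\mathbf{1}(h^{TF}_Z<0)$ of the Thomas--Fermi mean-field operator $h^{TF}_Z := -\Delta-\Phitf{Z}$ and then evaluate the trace of this projection in each angular-momentum sector by semiclassical analysis. Since $h^{TF}_Z$ commutes with $P_\ell$, it splits into radial operators $h^{TF}_{Z,\ell} = -\partial_r^2 + \ell(\ell+1)r^{-2} - \Phitf{Z}(r)$. Substituting $s = Z^{1/3} r$ and using the scaling $\Phitf{Z}(x) = Z^{4/3}\Phitf{1}(Z^{1/3}x)$ with $\ell = \lceil\lambda Z^{1/3}\rceil$ turns this into $Z^{4/3}$ times a semiclassical Schr\"odinger operator $-Z^{-2/3}\partial_s^2 + \lambda^2 s^{-2} - \Phitf{1}(s)$, modulo a relative $O(Z^{-1/3})$ correction in the centrifugal term. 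Weyl's formula with semiclassical parameter $\hbar = Z^{-1/3}$ then produces exactly $Z^{1/3}\kappa(\lambda)(1+o(1))$ negative eigenvalues, matching \eqref{eq:kappa}.

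For the inequality ``$\geq$", which must hold for every $\lambda>0$, I would use a variational perturbation. Since $0\leq \gamma_\Psi\leq 1$, for $c>0$ we have $\Tr[(h^{TF}_Z + cP_\ell)\gamma_\Psi] \geq \Tr[(h^{TF}_Z + cP_\ell)_-]$, which rearranges to
\[
c\,\Tr[P_\ell\gamma_\Psi] \geq \bigl(\Tr[(h^{TF}_Z+cP_\ell)_-] - \Tr[(h^{TF}_Z)_-]\bigr) - \bigl(\Tr[h^{TF}_Z\gamma_\Psi] - \Tr[(h^{TF}_Z)_-]\bigr).
\]
The first bracket is bounded below by $c\cdot 2(2\ell+1)\,N_\ell(c)$, where $N_\ell(c)$ counts radial eigenvalues of $h^{TF}_{Z,\ell}$ below $-c$; the semiclassical analysis gives $N_\ell(c) = Z^{1/3}\kappa_c(\lambda)(1+o(1))$ with $\kappa_c\to\kappa$ as $c/Z^{4/3}\to 0$. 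The second bracket I would control by the quantitative Thomas--Fermi theorem: the TF identity $(\Phitf{Z})^{3/2}=3\pi^2\rhotf{Z}$, Lieb--Oxford bounds on exchange, and the Scott-corrected asymptotics $E(Z)= E^{TF}(Z) + \tfrac12 qZ^2 + o(Z^2)$ (Fefferman--Seco) yield, after a fortuitous cancellation of the Scott contribution between $E(Z)$ and $\Tr[(h^{TF}_Z)_-]$, an error small enough that optimizing $c=c(Z)$ produces $\Tr[P_\ell\gamma_\Psi] \geq 2(2\ell+1)Z^{1/3}\kappa(\lambda)(1-o(1))$.

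The almost-everywhere equality then follows from electron-number conservation. Setting $f_Z(\lambda) := \Tr[P_{\lceil\lambda Z^{1/3}\rceil}\gamma_\Psi]/[2(2\lceil\lambda Z^{1/3}\rceil+1)Z^{1/3}]$, a Riemann-sum reorganization of $\sum_\ell \Tr[P_\ell\gamma_\Psi]=Z$ produces $\int_0^\infty 4\lambda f_Z(\lambda)\,d\lambda \to 1$. Combined with $\int_0^\infty 4\lambda \kappa(\lambda)\,d\lambda = 1$ from \eqref{eq:kappaint}, the pointwise bound $\liminf_Z f_Z \geq \kappa$ just established, and Fatou's lemma applied to the nonnegative functions $4\lambda f_Z$, one is forced into $\int 4\lambda \liminf_Z f_Z\,d\lambda = \int 4\lambda \kappa\,d\lambda$, so $\liminf_Z f_Z = \kappa$ Lebesgue-a.e.

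The main technical obstacle is the second bracket of the perturbation inequality: showing that $\Tr[h^{TF}_Z\gamma_\Psi]-\Tr[(h^{TF}_Z)_-]$ is smaller than $cZ^{2/3}$ in a regime where $c$ is also small enough that the first bracket still captures the full $\kappa(\lambda)$. The Scott-level cancellation helps, but one must still estimate $D(\rho_\Psi-\rho^{TF})$, the exchange-correlation energy, and post-Scott corrections at a sharper rate than the classical Lieb--Simon $o(Z^{7/3})$. This likely requires either replacing $h^{TF}_Z$ by a Scott-refined mean field near the nucleus, or localizing the perturbation spatially away from the innermost region $|x|\lesssim Z^{-1}$ where the discrepancy between $\gamma_\Psi$ and the TF projection is worst.
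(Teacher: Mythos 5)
Your overall architecture matches the paper's: compare $\gamma_\Psi$ with the Thomas--Fermi mean-field operator, perturb that operator by a chemical potential supported on the sector $\ell$ (your $h^{TF}_Z+cP_\ell$ is the paper's $-\Delta-\Phitf{Z}+\mu Z^{4/3}$ restricted to the $\ell$-block, with $c=\mu Z^{4/3}$), evaluate the resulting traces semiclassically after the scaling $s=Z^{1/3}r$, and then upgrade the pointwise lower bound to almost-everywhere equality via $\sum_\ell\Tr[P_\ell\gamma_\Psi]=Z$, Fatou, and $\int 4\lambda\kappa(\lambda)\,d\lambda=1$. The last step is exactly the paper's, and your first-bracket estimate $\Tr[(h^{TF}_Z+cP_\ell)_-]-\Tr[(h^{TF}_Z)_-]\ge c\cdot 2(2\ell+1)N_\ell(c)$ is correct.

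The genuine gap is that you leave the central estimate --- the ``second bracket'' $\Tr[h^{TF}_Z\gamma_\Psi]-\Tr[(h^{TF}_Z)_-]$ --- as an open technical obstacle and propose a far harder and partly misdirected route to it (Fefferman--Seco, estimating $D(\rho_\Psi-\rho^{TF})$, a Scott-refined mean field near the nucleus). None of that is needed, and the cancellation you call ``fortuitous'' is structural. The paper takes two known results: the correlation-inequality lower bound $\langle\Psi_Z,H_Z\Psi_Z\rangle\ge\Tr[(-\Delta-\Phitf{Z})\gamma_\Psi]-\tfrac12\iint\rhotf{Z}(x)\rhotf{Z}(y)|x-y|^{-1}-CZ^{5/3}$ (Lieb), in which the positive term $D(\rho_\Psi-\rhotf{Z},\rho_\Psi-\rhotf{Z})$ is simply dropped so no estimate on it is required, and the ground-state energy upper bound $\langle\Psi_Z,H_Z\Psi_Z\rangle\le\Tr[(-\Delta-\Phitf{Z})_-]-\tfrac12\iint\rhotf{Z}\rhotf{Z}|x-y|^{-1}+CZ^{2-\varepsilon}$ (Siedentop--Weikard, Solovej--Spitzer). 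Subtracting, the direct terms cancel exactly and the Scott contributions are implicit on both sides, giving $\Tr[h^{TF}_Z\gamma_\Psi]-\Tr[(h^{TF}_Z)_-]\le CZ^{2-\varepsilon}$ with no new analysis near the nucleus. Moreover, the sector-wise version $\Tr[P_\ell\gamma_\Psi P_\ell h^{TF}_Z]-\Tr[P_\ell(h^{TF}_Z)_-]\le CZ^{2-\varepsilon}$ follows because each summand in $\ell$ is non-negative. A second, smaller issue: your plan requires $c=c(Z)$ with $Z^{4/3-\varepsilon}\ll c\ll Z^{4/3}$ together with semiclassical counting asymptotics for $N_\ell(c)$ that are uniform as $c/Z^{4/3}\to0$; the paper sidesteps this uniformity question by keeping $\mu>0$ fixed, obtaining $e_G(\lambda,\mu)-\mu\chi_\Psi(\lambda,Z)\le e_G(\lambda+\varepsilon,0)+o(1)$ for each fixed $\mu$, and then concluding via the Legendre-transform (canonical energy) identity that $e(\lambda,n)=e_G(\lambda,0)$ forces $n\ge n(\lambda,0)=\kappa(\lambda)$. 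You should either adopt that argument or supply the uniform-in-$c$ semiclassics your version needs.
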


The Madelung rule, however, predicts that this limit should be given by the simple expression
$$
\kappa^{\rm Madelung}(\lambda):=[6^{1/3}-2\lambda]_+.
$$
This can be deduced by the following formula, which gives the atomic number $Z$ where we start filling an orbital of angular momentum $\ell$ for the first time, according to the Madelung rule,
\begin{multline*}
    Z_\ell(n) = \frac{1}{6}(n+2\ell - 1)\big((n+2\ell)^2+4(n+2\ell)+9\big)\\ - \frac{1}{4}(1+(-1)^n)(n+2\ell+1)+1 - 2\ell(\ell+2).
\end{multline*}
Here we have introduced the notation
$[x]_{\pm} = \pm \max\{ \pm x, 0\}$ for the positive/negative part of a real number $x$.
As in \eqref{eq:kappaint}, we have the identity
$$
\int4\lambda \kappa^{\rm Madelung}(\lambda)\,d\lambda=1.
$$ 
Therefore, although this can be understood as a valid distribution of angular momentum, Theorem \ref{thm:main} confirms that it is not the correct distribution in the limit of large $Z$.

Finally, we remark that $\kappa^{\rm Madelung}$ can be obtained by replacing $\Phitf{1}$ by the so-called Tietz potential in \eqref{eq:kappa}, see \cite{wong79}, \cite{tietz60}. However, these potentials are distinct, and, therefore, their difference can be understood as what leads to the deviation from Madelung's rule in the limit of large atomic number.

\section{Thomas-Fermi potential }

In this section, we introduce the semiclassical quantities which will be used in the proof of Theorem \ref{thm:main}. First, we consider for $\lambda>0$ and $\mu\geq0$ the semiclassical grand canonical expression
$$e_G(\lambda,\mu)=-\frac2{3\pi}\int_0^\infty(\Phitf1(r)-\lambda^2r^{-2}-\mu)_+^{3/2}dr,$$
and the semiclassical counting function
$$
n(\lambda,\mu)=\frac1\pi\int_0^\infty(\Phitf1(r)-\lambda^2r^{-2}-\mu)^{1/2}_+dr.
$$
We remark that $n(\lambda,0) = \kappa(\lambda)$. It can be seen that the above quantities are finite. Indeed, by \eqref{eq:phi_tf} we have $r\,\Phitf{1}(r) \to 1$ as $r \to 0+$ and, in addition, $\Phitf{1}(r) = O(r^{-4})$ as $r \to \infty$, see for example \cite{Lieb81}. It is readily verified that $\partial_\mu e_G(\lambda,\mu)=n(\lambda,\mu)$.
Since $n(\lambda,\mu)$ is non-negative and non-increasing in $\mu$ we see that $e_G(\lambda, \mu)$ as a function of $\mu$ is non-decreasing and concave. Furthermore, it has finite slope of $n(\lambda, 0)$ at zero. For $\mu$ large enough, depending on $\lambda>0$, we see that $e_G(\lambda, \mu)=n(\lambda,\mu)=0$ since the integrands become zero.

For any $n \ge 0$, we define the canonical energy as 
\begin{align}
\label{eq:ce}
    e(\lambda,n)=\max_{\mu\ge 0}(e_G(\lambda,\mu)-\mu n).
\end{align}
It follows immediately that $e(\lambda, n) \ge e_G(\lambda, 0)$. This becomes equality if and only if $n \ge n(\lambda, 0)$. Indeed, we need only consider the derivative 
\begin{equation}\label{eq:eGmuder}
\partial_\mu(e_G(\lambda, \mu) - \mu n) = n(\lambda, \mu) - n
\end{equation}
in the following cases. If $n \ge n(\lambda, 0)$  then \eqref{eq:eGmuder} is non-positive and therefore $\mu = 0$ maximizes \eqref{eq:ce}. 
Since $n(\lambda,\mu)$ is strictly decreasing as a function of $\mu$ as long as it is positive, we see that for $0 < n < n(\lambda, 0)$ the derivative  \eqref{eq:eGmuder} vanishes at a unique  $\mu>0$. Finally, for $n=0$ we have $e(\lambda,0)=0$.


\section{Many-body formulation}


Unless otherwise specified, we will understand the trace of a one-particle operator to be that over $L^2({\mathbb R}^3,{\mathbb C}^2)$. It is known, for example from 
\cite[equation (5.18) with $\varepsilon=0$]{Lieb81} (see also \cite{HUGHES1990213} or \cite[proof of Lemma 2]{siewei89}), that 
\begin{align}\label{eq:HZlower}
    \langle\Psi_Z,H_Z\Psi_Z\rangle\geq & \Tr\big[(-\Delta - \Phitf{Z})\gamma_\Psi\big]\nonumber\\&
    -\frac12\iint \frac{\rhotf{Z}(x)\rhotf{Z}(y)}{|x-y|}dxdy-CZ^{5/3}
\end{align}
for some $C>0$.
In addition, it follows, for example from \cite{Seidentopweikard87} or \cite{solspitz03}, that for some $\varepsilon>0$ we have 
\begin{align}
    \langle\Psi_Z, H_Z\Psi_Z\rangle\leq &\nonumber
    \text{Tr}\big[(-\Delta-\Phitf{Z})_{-} \big]\\& -\frac12\iint \frac{\rhotf{Z}(x)\rhotf{Z}(y)}{|x-y|}dxdy+CZ^{2-\varepsilon}.\label{eq:HZupper}
\end{align}
We then conclude from \eqref{eq:HZlower} and \eqref{eq:HZupper} that 
\begin{equation*}  \sum_{\ell=0}^\infty\Big(\text{Tr}\left[P_\ell\gamma_\Psi P_\ell (-\Delta-\Phitf{Z})\right]-\text{Tr}\left[P_\ell(-\Delta-\Phitf{Z} )_-\right]\Big)\leq
    CZ^{2-\varepsilon}.
\end{equation*}
Since every term in the sum is non-negative we find that for every $\ell$ and every  $\mu\geq0$ 
\begin{align}
    \text{Tr}\left[P_\ell(-\Delta-\Phitf{Z} )_-\right]\geq & \text{Tr}\left[P_\ell\gamma_\Psi P_\ell (-\Delta-\Phitf{Z})\right]
    -CZ^{2-\varepsilon}\nonumber\\=&
    \text{Tr}\left[P_\ell\gamma_\Psi P_\ell (-\Delta-\Phitf{Z}+\mu Z^{4/3})\right]\nonumber\\&-\mu Z^{4/3}\text{Tr}[P_\ell\gamma_\Psi]
    -CZ^{2-\varepsilon}.\label{eq:llower}
\end{align}
This bound holds for all $\ell$, but we choose a sequence of $\ell$'s which grow proportional to $Z^{1/3}$. Namely, given some $\lambda>0$, for each $Z$ let $\ell_Z$ be the unique positive integer such that
\begin{equation}\label{eq:ellbound}
    (\ell_Z-1)Z^{-1/3} <\lambda\leq\ell_Z Z^{-1/3}.
\end{equation}

\begin{lm}
\label{lem:lb}
Let $\lambda>0$ and $\mu \ge 0$. Then
\begin{align*}
\Tr\big[P_{\ell_Z}\gamma_\Psi P_{\ell_Z} (-\Delta-\Phitf{Z}+\mu Z^{4/3})\big] \geq
    4\lambda Z^2 e_G(\lambda,\mu) + o_{\lambda,\mu}(Z^2)
\end{align*}
as $Z \to \infty$. Here the subscripts $\lambda,\mu$ on the error term indicate that it may depend on these parameters.
\end{lm}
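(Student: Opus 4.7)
The plan is to combine a density-matrix variational bound with a one-dimensional semiclassical Weyl asymptotic for the radial Schr\"odinger operator in the partial-wave channel $\ell = \ell_Z$. Write $A := -\Delta - \Phitf{Z} + \mu Z^{4/3}$. The projection $P_{\ell_Z}$ commutes with $A$ (and hence with $A_-$) because it commutes with $-\Delta$ and with every spherically symmetric potential. The Pauli-type bound $0 \leq P_{\ell_Z}\gamma_\Psi P_{\ell_Z} \leq P_{\ell_Z}$, together with the identity $\inf_{0\le\gamma\le P_{\ell_Z}}\Tr[\gamma A] = \Tr[P_{\ell_Z}A_-]$ (the infimum is attained on the spectral projection onto $\{A<0\}$ inside $\mathrm{Ran}(P_{\ell_Z})$), therefore gives
$$\Tr[P_{\ell_Z}\gamma_\Psi P_{\ell_Z} A] \geq \Tr[P_{\ell_Z} A_-] = 2(2\ell_Z+1)\,\mathrm{tr}[(h_{\ell_Z})_-],$$
where the last equality is the partial-wave decomposition with spin and angular degeneracies $2$ and $2\ell+1$, and
$$h_\ell := -\partial_r^2 + \frac{\ell(\ell+1)}{r^2} - \Phitf{Z}(r) + \mu Z^{4/3}$$
acts on $L^2((0,\infty), dr)$ with Dirichlet condition at $0$.

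It remains to compute $\mathrm{tr}[(h_{\ell_Z})_-]$ asymptotically. Rescaling $r = Z^{-1/3}s$ and applying the Thomas--Fermi scaling $\Phitf{Z}(r) = Z^{4/3}\Phitf{1}(Z^{1/3}r)$ give, after a unitary change of variable, $h_{\ell_Z} = Z^{4/3}\tilde h_Z$ with
$$\tilde h_Z := -Z^{-2/3}\partial_s^2 + \frac{\ell_Z(\ell_Z+1)Z^{-2/3}}{s^2} - \Phitf{1}(s) + \mu,$$
a semiclassical Schr\"odinger operator with effective Planck constant $Z^{-1/3}$. By \eqref{eq:ellbound}, $\ell_Z/Z^{1/3} = \lambda + O(Z^{-1/3})$, and consequently $\ell_Z(\ell_Z+1)Z^{-2/3} = \lambda^2 + O(Z^{-1/3})$. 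Since $\Phitf{1}(s) \sim 1/s$ near $s=0$ and $\Phitf{1}(s) = O(s^{-4})$ as $s \to \infty$, the classical region $\{s>0:\Phitf{1}(s)-\lambda^2/s^2-\mu>0\}$ is a compact subset of $(0,\infty)$ for $\lambda > 0$, and the symbol is smooth there. Standard one-dimensional semiclassical Weyl asymptotics --- obtained for instance by two-sided coherent-state (Lieb-type) bounds together with a smooth spatial cutoff and a crude tail estimate --- then yield
$$\mathrm{tr}[(\tilde h_Z)_-] = -\frac{2Z^{1/3}}{3\pi}\int_0^\infty\!\big(\Phitf{1}(s)-\lambda^2/s^2-\mu\big)_+^{3/2}\,ds + o_{\lambda,\mu}(Z^{1/3}) = Z^{1/3}\,e_G(\lambda,\mu) + o_{\lambda,\mu}(Z^{1/3}).$$
Multiplying by $Z^{4/3}$ and by the degeneracy $2(2\ell_Z+1) = 4\lambda Z^{1/3} + O(1)$ produces the claimed bound $4\lambda Z^2 e_G(\lambda,\mu) + o_{\lambda,\mu}(Z^2)$.

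The main technical obstacle is making the $o(Z^{1/3})$ remainder in the semiclassical asymptotic for $\mathrm{tr}[(\tilde h_Z)_-]$ rigorous, while simultaneously absorbing the $O(Z^{-1/3})$ perturbation of the centrifugal coefficient away from $\lambda^2$. The saving feature is that for $\lambda>0$ the centrifugal barrier isolates the classical region both from the Coulomb singularity at $s=0$ and from the slowly decaying tail at infinity, reducing the analysis to a smooth, compactly supported semiclassical problem for which the standard coherent-state method (as in Lieb's proof of the Thomas--Fermi limit) provides matching upper and lower bounds. Continuity of the Weyl integrand in the centrifugal parameter then absorbs the $O(Z^{-1/3})$ discrepancy into the $o(Z^{1/3})$ error.
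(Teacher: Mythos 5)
Your proof is correct and follows essentially the same route as the paper's: a Pauli/variational reduction to the radial operator in the $\ell_Z$ channel with degeneracy $2(2\ell_Z+1)$, Thomas--Fermi rescaling to a semiclassical operator with effective Planck constant $Z^{-1/3}$, and a one-dimensional Weyl asymptotic. The only cosmetic difference is that the paper uses the one-sided bound $\ell_Z(\ell_Z+1)Z^{-2/3}\ge \lambda^2$ from \eqref{eq:ellbound} to pass directly to the fixed centrifugal term $\lambda^2/r^2$ (which suffices for a lower bound), rather than tracking an $O(Z^{-1/3})$ perturbation of the centrifugal coefficient and absorbing it by continuity as you do.
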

\begin{proof}
By passing to spherical coordinates, we can bound the left-hand side from below by
\begin{align*}
2(2\ell_Z+1)&\Tr_{L^2({\mathbb R}_+)}\big[(h_{\ell_Z,Z}+\mu Z^{4/3})_-\big]
\end{align*}
where 
$$
h_{\ell,Z}=-\frac{d^2}{dr^2}-\Phitf{Z}(r)+\frac{\ell(\ell+1)}{r^2}.
$$
The operator $h_{\ell,Z}$ is unitarily equivalent to $Z^{4/3}\widetilde{h}_{\ell,Z}$ with 
$$
\widetilde{h}_{\ell,Z}=-Z^{-2/3}\frac{d^2}{dr^2}-\Phitf{1}(r)+\frac{\ell(\ell+1)}{Z^{2/3}r^2}.$$
By \eqref{eq:ellbound}, we then have
$$
\widetilde{h}_{\ell_Z,Z}\geq -Z^{-2/3}\frac{d^2}{dr^2}-\Phitf{1}(r)+\frac{\lambda^2}{r^2}.
$$
A standard leading order semiclassical estimate, along with \eqref{eq:ellbound} again, therefore gives
\begin{align*}2(2\ell_Z+1)&\text{Tr}_{L^2({\mathbb R}_+)}\big([h_{\ell_Z,Z}+\mu Z^{4/3}]_-\big) \\
&\geq 4(2\pi)^{-1}Z^2\lambda\int\big[p^2-\Phitf1+\lambda^2r^{-2}+\mu\big]_-drdp-o_{\lambda,\mu}(Z^2)\nonumber\\
&= 4Z^2\lambda e_G(\lambda,\mu)-o_{\lambda,\mu}(Z^2)
\end{align*}
as $Z\to\infty$.
\end{proof}

\begin{lm}
\label{lem:ub}
Let $\lambda, \epsilon>0$, and $\ell_Z$ be given by \eqref{eq:ellbound}. Then
    \begin{equation}\label{eq:enupper}
    \Tr\big[P_{\ell_Z}(-\Delta-\Phitf{Z})_-\big]
    \leq 4\lambda Z^2 e_G(\lambda+\varepsilon,0)+ o_{\lambda, \varepsilon}(Z^2)
\end{equation}
as $Z \to \infty$.
\end{lm}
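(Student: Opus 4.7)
My proof would mirror the strategy of Lemma \ref{lem:lb}, substituting the lower semiclassical estimate with its upper counterpart. First, I would use the decomposition in spherical harmonics and spin to obtain the exact identity
$$
\Tr\big[P_{\ell_Z}(-\Delta - \Phitf{Z})_-\big] = 2(2\ell_Z+1)\,\Tr_{L^2(\real_+)}\big[(h_{\ell_Z, Z})_-\big],
$$
and then rescale via $h_{\ell_Z, Z} = Z^{4/3} \tilde h_{\ell_Z, Z}$ as in the excerpt. Where Lemma \ref{lem:lb} bounded the centrifugal coefficient from below by $\lambda^2$, here I would bound it from above: since \eqref{eq:ellbound} gives $\ell_Z/Z^{1/3} < \lambda + Z^{-1/3}$, for all $Z$ sufficiently large (depending on $\lambda$ and $\varepsilon$) we have $\ell_Z(\ell_Z+1)/Z^{2/3} \leq (\lambda + \varepsilon)^2$, hence
$$
\tilde h_{\ell_Z, Z} \leq -Z^{-2/3}\frac{d^2}{dr^2} - \Phitf{1}(r) + \frac{(\lambda+\varepsilon)^2}{r^2}.
$$
By monotonicity of $\Tr[(\cdot)_-]$ under the operator order, this upper-bounds $\Tr_{L^2(\real_+)}[(\tilde h_{\ell_Z, Z})_-]$ by the trace of the negative part of the simpler half-line Schr\"odinger operator on the right.

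Next I would invoke the standard leading-order semiclassical (Weyl) upper bound for that half-line operator with semiclassical parameter $h = Z^{-1/3}$. The effective potential $V(r) = -\Phitf{1}(r) + (\lambda+\varepsilon)^2/r^2$ is strictly positive near $r = 0$, because the centrifugal term dominates the nuclear singularity $\Phitf{1}(r) \sim 1/r$, and for large $r$, because $\Phitf{1}(r) = O(r^{-4})$; hence the negative part of $V$ is supported in a fixed compact subinterval of $(0, \infty)$. A direct phase-space calculation then gives
$$
\Tr_{L^2(\real_+)}\Big[\Big(-Z^{-2/3}\frac{d^2}{dr^2} - \Phitf{1} + \frac{(\lambda+\varepsilon)^2}{r^2}\Big)_-\Big] \leq Z^{1/3}\, e_G(\lambda+\varepsilon, 0) + o_{\lambda,\varepsilon}(Z^{1/3}).
$$

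To conclude, I would multiply by $2(2\ell_Z+1) Z^{4/3}$ and use $2(2\ell_Z + 1) = 4\lambda Z^{1/3} + O(1)$ together with $e_G(\lambda+\varepsilon, 0) \leq 0$, so that the remainder $\big(2(2\ell_Z+1) - 4\lambda Z^{1/3}\big) Z^{5/3} e_G(\lambda+\varepsilon, 0)$ is only $O(Z^{5/3}) = o_{\lambda,\varepsilon}(Z^2)$. The main technical obstacle is the semiclassical upper bound in the previous paragraph; however, since $V_-$ sits in a compact set away from both $r = 0$ and $r = \infty$, this reduces to textbook one-dimensional semiclassical asymptotics and can be established either by a coherent-state trial density matrix inserted into the variational formulation of $\Tr[(\cdot)_-]$, or by directly invoking known sharp Weyl asymptotics for such Schr\"odinger operators.
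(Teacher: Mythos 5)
Your proposal is correct and follows essentially the same route as the paper: the paper's own proof simply notes that $\lambda+\varepsilon\geq(\ell_Z+1)Z^{-1/3}$ for $Z\geq(\varepsilon/2)^{-3}$ and then repeats the argument of Lemma~\ref{lem:lb} with a semiclassical upper bound in place of the lower bound. Your additional observations (that the projected trace identity is exact for the radially symmetric potential, and that the negative part of the effective potential is supported in a compact subset of $(0,\infty)$) are accurate and correctly fill in the details the paper leaves implicit.
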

\begin{proof}
For all $\varepsilon>0$ we have $\lambda+\varepsilon\geq (\ell_Z+1)Z^{-1/3}$ for all $Z\geq (\varepsilon/2)^{-3}$. We then continue as in the proof of Lemma~\ref{lem:lb}, but with a semiclassical upper bound.
\end{proof}

For $\lambda>0$, let $\ell_Z$ be the sequence defined by (\ref{eq:ellbound}). Then define $\chi_{\Psi}(\lambda, \,\cdot\,)$ by
$$
\chi_{\Psi}(\lambda, Z)=\frac{\text{Tr}[P_{\ell_Z}\gamma_{\Psi}]}{4\lambda Z^{2/3}}
$$ 
for $Z=1,2,\dots$. Now fix $\varepsilon>0$ and $\mu\geq0$. Then by Lemmas~\ref{lem:lb} and~\ref{lem:ub} applied to \eqref{eq:llower} we get
\begin{equation}\label{eq:nellbound}
    e_G(\lambda,\mu)-\mu \chi_{\Psi}(\lambda, Z)\leq e_G(\lambda+\varepsilon,0)+ o_{\lambda, \mu, \varepsilon}(1)
\end{equation}
for integer $Z \to \infty$.


\begin{lm} \label{lm:lower} For all $\lambda>0$ we have 
    $$
        \liminf_{Z\to\infty}\chi_\Psi(\lambda,Z)\geq    \kappa(\lambda).
    $$  
\end{lm}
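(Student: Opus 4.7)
The plan is to extract the bound directly from the key inequality \eqref{eq:nellbound} by a three-step limiting procedure: first send $Z\to\infty$ to eliminate the error, then $\varepsilon\to 0^+$, and finally $\mu\to 0^+$.

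Concretely, for fixed $\mu>0$ and $\varepsilon>0$, rearranging \eqref{eq:nellbound} gives
$$
\chi_\Psi(\lambda,Z)\geq \frac{e_G(\lambda,\mu)-e_G(\lambda+\varepsilon,0)}{\mu}-\frac{o_{\lambda,\mu,\varepsilon}(1)}{\mu}.
$$
Taking $\liminf_{Z\to\infty}$ kills the last term and yields
$$
\liminf_{Z\to\infty}\chi_\Psi(\lambda,Z)\geq \frac{e_G(\lambda,\mu)-e_G(\lambda+\varepsilon,0)}{\mu}.
$$
The map $\varepsilon\mapsto e_G(\lambda+\varepsilon,0)$ is continuous at $\varepsilon=0$ by dominated convergence, since for $\varepsilon$ small the integrand defining $e_G(\lambda+\varepsilon,0)$ is dominated by the integrable, compactly supported function $(\Phitf{1}(r)-(\lambda/2)^2 r^{-2})_+^{3/2}$. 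Letting $\varepsilon\to 0^+$ therefore gives
$$
\liminf_{Z\to\infty}\chi_\Psi(\lambda,Z)\geq \frac{e_G(\lambda,\mu)-e_G(\lambda,0)}{\mu}.
$$
Finally, since $\partial_\mu e_G(\lambda,\mu)=n(\lambda,\mu)$ with $n(\lambda,0)=\kappa(\lambda)$, the right derivative of $e_G(\lambda,\cdot)$ at $0$ equals $\kappa(\lambda)$; by concavity of $e_G(\lambda,\cdot)$ the difference quotient on the right is non-increasing in $\mu$, so its supremum over $\mu>0$ (attained as $\mu\to 0^+$) is exactly $\kappa(\lambda)$, giving the desired lower bound.

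The substantive semiclassical analysis has already been packaged in Lemmas~\ref{lem:lb} and~\ref{lem:ub}, so no new estimates are required here. The only point that needs care is the order of limits: because the error term in \eqref{eq:nellbound} depends on both $\mu$ and $\varepsilon$, one must take $Z\to\infty$ \emph{before} dividing by $\mu$ and then sending $\varepsilon,\mu\to 0^+$. Taking them in the wrong order could amplify the $\mu$-dependent error through the $1/\mu$ factor and invalidate the bound.
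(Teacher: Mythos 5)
Your proof is correct, but it is organized differently from the paper's. The paper argues by contradiction: assuming $\liminf_{Z\to\infty}\chi_\Psi(\lambda,Z)\le n<n(\lambda,0)$ along a subsequence, it passes to the limit in \eqref{eq:nellbound}, takes $\varepsilon\to0$, and concludes $e(\lambda,n)\le e_G(\lambda,0)$, which by the Legendre-transform characterization following \eqref{eq:ce} forces $n\ge n(\lambda,0)$, a contradiction. You instead run the argument directly: rearrange \eqref{eq:nellbound}, take $Z\to\infty$, then $\varepsilon\to0$, and finally identify $\sup_{\mu>0}\mu^{-1}\bigl(e_G(\lambda,\mu)-e_G(\lambda,0)\bigr)$ with the right derivative $n(\lambda,0)=\kappa(\lambda)$ via concavity. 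The two proofs rest on exactly the same ingredients --- the inequality \eqref{eq:nellbound}, concavity of $e_G(\lambda,\cdot)$, and the slope $n(\lambda,0)$ at $\mu=0$ --- since the paper's ``remark after \eqref{eq:ce}'' is itself proved by inspecting $\partial_\mu(e_G(\lambda,\mu)-\mu n)$. What your version buys is that it is direct rather than by contradiction and bypasses the canonical energy $e(\lambda,n)$ entirely, at the cost of having to justify the continuity of $\varepsilon\mapsto e_G(\lambda+\varepsilon,0)$ and the convergence of the difference quotient yourself; both of these you handle correctly (the dominated-convergence bound by $(\Phitf{1}(r)-(\lambda/2)^2r^{-2})_+^{3/2}$ is valid since that function is bounded with compact support in $(0,\infty)$, and the paper already records that $e_G(\lambda,\cdot)$ is concave with finite slope $n(\lambda,0)$ at zero). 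Your closing caution about the order of limits --- taking $Z\to\infty$ before dividing the error by $\mu$ --- is exactly the right point to flag and matches the order in which the paper takes its limits.
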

\begin{proof}
    We recall that $\kappa(\lambda)=n(\lambda,0)$. Assume for contradiction that for some $\lambda>0$ we have 
    $$
    \liminf_{Z\to\infty}\chi_\Psi(\lambda,Z)< n(\lambda,0).
    $$ We may then find some $n<n(\lambda,0)$ and a sequence $Z_k\to\infty$ such that 
    $\chi_\Psi(\lambda,Z_k)\leq n$ for all $k$. Thus by \eqref{eq:nellbound}, we see that for all $\mu\geq0$,
    $$
    e_G(\lambda,\mu)-\mu n \leq e_G(\lambda+\varepsilon,0),
    $$
    where we took the limit $k \to \infty$. Letting $\varepsilon\to0$, we arrive at 
    $$
    e(\lambda,n)\leq e_G(\lambda,0).
    $$
    Here, we used that $e_G(\lambda,0)$ is continuous in $\lambda$ on $(0,\infty)$. By the remark after \eqref{eq:ce}, this is in fact an equality and holds only if $n \ge n(\lambda, 0)$. This contradicts our assumption that $n < n(\lambda, 0)$.
\end{proof}
To arrive at an inequality in the opposite direction we observe that for all $Z$
\begin{equation*}
\int_0^\infty 4\lambda\chi_\Psi(\lambda,Z)\,d\lambda=
Z^{-1} \sum_{\ell=1}^\infty \text{Tr}[P_\ell\gamma_\Psi] \leq Z^{-1}\sum_{\ell=0}^\infty \text{Tr}[P_\ell\gamma_\Psi]=1.
\end{equation*}
Therefore, by Fatou's lemma, Lemma~\ref{lm:lower} and \eqref{eq:kappaint} we conclude that 
\begin{multline*}
  1\geq \liminf_{Z\to\infty} \int_0^\infty 4\lambda\chi_\Psi(\lambda,Z)\,d\lambda\\\geq  
  \int_0^\infty 4\lambda\liminf_{Z\to\infty} \chi_\Psi(\lambda,Z)\,d\lambda\geq 
  \int_0^\infty 4\lambda\kappa(\lambda)\,d\lambda=1.
\end{multline*}
It then follows from Lemma~\ref{lm:lower} again that 
$$
\liminf_{Z\to\infty} \chi_\Psi(\lambda,Z)=\kappa(\lambda)
$$
for almost all $\lambda\geq 0$. A straightforward restatement completes the proof of Theorem \ref{thm:main}.

\vskip 0.5cm
\noindent
\textbf{Funding.} This work was supported by the Villum Centre of Excellence
for the Mathematics of Quantum Theory (QMATH) with Grant No.10059.
A.B. and S.F. were also partially supported by the European Union (via the ERC Advanced Grant MathBEC - 101095820). Views and opinions expressed are however those of the authors only and do not necessarily reflect those of the European Union or the European Research Council. Neither the European Union nor the granting authority can be held responsible for them.

\bibliographystyle{unsrt}
\bibliography{references}

\end{document}